\def\ps@headings{%
\def\@oddhead{\mbox{}\scriptsize\rightmark \hfil \thepage}%
\def\@evenhead{\scriptsize\thepage \hfil \leftmark\mbox{}}%
\def\@oddfoot{}%
\def\@evenfoot{}}
\begin{document}

\newtheorem{thm}{Theorem}[section]
\newtheorem{cor}[thm]{Corollary}
\newtheorem{lem}[thm]{Lemma}
\newtheorem{prop}[thm]{Proposition}

\newtheorem{rem}[thm]{Remark}

\newtheorem{ex}[thm]{Example}

\newtheorem{defi}[thm]{Definition}

%
\title{Comparative Resilience Notions and Vertex Attack Tolerance of Scale-Free Networks}


\author{
  \IEEEauthorblockN{John Matta, Jeffrey Borwey, Gunes Ercal}
  \IEEEauthorblockA{Southern Illinois University Edwardsville
  \\\{jmatta,jborwey,gercal\}@siue.edu}
}

\maketitle

\begin{abstract}

We are concerned with an appropriate mathematical measure of resilience in the face of targeted node attacks for arbitrary degree networks, and subsequently comparing the resilience of different scale-free network models with the proposed measure.  We strongly motivate our resilience measure termed \emph{vertex attack tolerance} (VAT), which is denoted mathematically as $\tau(G) =  \min_{S \subset V} \frac{|S|}{|V-S-C_{max}(V-S)|+1}$, where  $C_{max}(V-S)$ is the largest connected component in $V-S$.  We attempt a thorough comparison of VAT with several existing resilience notions: conductance, vertex expansion, integrity, toughness, tenacity and scattering number.  Our comparisons indicate that for artbitrary degree distributions VAT is the only measure that fully captures both the major \emph{bottlenecks} of a network and the resulting \emph{component size distribution} upon targeted node attacks (both captured in a manner proportional to the size of the attack set).  For the case of $d$-regular graphs, we prove that $\tau(G) \le d\Phi(G)$, where $\Phi(G)$ is the conductance of the graph $G$.  Conductance and expansion are well-studied measures of robustness and bottlenecks in the case of regular graphs but fail to capture resilience in the case of highly heterogeneous degree graphs.  Regarding comparison of different scale-free graph models, our experimental results indicate that PLOD graphs with degree distributions identical to BA graphs of the same size exhibit consistently better vertex attack tolerance than the BA type graphs, although both graph types appear asymptotically resilient for BA generative parameter $m = 2$.  BA graphs with $m = 1$ also appear to lack resilience, not only exhibiting very low VAT values, but also great transparency in the identification of the vulnerable node sets, namely the hubs, consistent with well known previous work.

\end{abstract}

\begin{IEEEkeywords}

scale-free; conductance; resilience

\end{IEEEkeywords}

%
\IEEEpeerreviewmaketitle

\section{Introduction and Related Work}

While networks arise ubiquitously from countless and varied disciplines, an important problem relevant to all types of networks is that of measuring and generating resilience characteristics.  Although the definition of what constitutes a resilient network may differ across domains (particularly if the network's nodes and edges have various attributes associated with them), the topological properties characterizing resilience may be fairly stable across divergent domains and may depend only on the network's graph theoretic representation.  At a minimum, any notion of resilience must express the relative size of a most critical set of target edges or target vertices whose removal (upon an attack) would be detrimental to the remaining network and attempt to quantify the amount of damage that is done.  The graph-theoretic network resilience problem in its utmost generality is not new.  But, the most studied theoretical machinery  to discuss it, namely conductance\cite{ChungSpectraBook} (and the related expansion \cite{Alon86}), is more meaningful for networks with homogeneous degree distributions because it is a fundamentally edge-based notion.  However, there is a growing body of work \cite{Newman2006} indicating that many real world networks, ranging from biological networks to online social networks, often exhibit heterogeneous and scale-free degree distributions.  Even in situations where the scale-free nature of a degree distribution may be questioned \cite{DoylePNAS}, an assumption of degree homogeneity would be even more suspect in many networks \cite{aldersonTopo}.  Some notable exceptions to this are, of course, networks whose connection are purely defined by distances in low dimensional space (such as random ad-hoc networks) as well as networks that have very hard and frugal constraints on the number of connections their nodes are allowed.  In any case, there is a gap between much of the beautiful existing graph theoretical machinery to analyze resilience in networks based on conductance and the actual resilience, particularly against \emph{node} attacks in many actual networks, which have heterogeneous degrees.

It is important to note, of course, that conductance (in its various normalized and non-normalized forms) is a fundamental property of graphs independent of its applicability to edge-based resilience or resilience in general.  Conductance is intimately related to both the mixing times of random walks on graphs \cite{SinclairJerrum} and the eigenvalues of the graph's adjacency matrix (or normalized adjacency matrix as befits the situation) \cite{ChungSpectraBook}.  Therefore, one very important application of conductance has involved the analysis of randomized algorithms, particularly for hard counting problems, yielding an important place for the measure in theoretical computer science.  Many of the Markov chains considered in such contexts are regular or almost-regular, and it is relatively recently that the heterogeneity of many actual networks has become established across many fields, the relevance of such degree distributions first being established not by theoretical computer scientists but by \emph{network scientists} in any case.

To illustrate why degree heterogeneity poses such a problem in the application of conductance (conveniently approximable via eigenvalues) to resilience, we first remind the reader of the famous work \cite{ABAchilles} claiming the vulnerability of the internet to targeted attacks and the equally famous response \cite{DoylePNAS} and extension \cite{aldersonTopo} questioning the assumptions of the model considered.  An observer may then conclude that although the Barabasi-Albert model (and preferential attachment in general) may tend to induce a small target set of critical nodes whose removal severely disconnects the graph, there may be other generative models of scale-free networks that are significantly more resilient (against node attacks).  Particularly, an observer familiar with the conductance benefits of \emph{completely random} edge choices for homogeneous degree networks \cite{Furedi81eigenvalues,Friedman89,Alon86} may even be tempted to claim that a generative model in which edges are randomly chosen conditional upon the constraint of a pre-specified power-law degree distribution (such as the PLOD model \cite{Palmer2000} ) should be more resilient than the preferential attachment model.  Yet, if one were to take conductance as the measure of resilience against node attacks, then one would observe a precisely contradicting response: In fact, the conductance of scale-free graphs generated via preferential attachment (which we shall refer to as BA graphs henceforth) is \emph{optimal} \cite{conductancePASF} while the conductance of random scale-free graphs is sub-optimal \cite{conductanceRandSF} (especially relative to BA graphs) though still fairly good.  Clearly, conductance does not capture resilience against node attacks.

In this work, we are motivated by the search for both (i) a measure that \emph{does} capture resilience against node attacks in a meaningful way, as well (ii) the actual comparative resilience of different generative scale-free models.  As such, this work is composed of two logical modules. In the first module, we propose and strongly motivate our resilience measure which we call \emph{vertex attack tolerance} denoted by mathematically by $\tau(G)$ and shortened VAT.  Our measure was first introduced by the authors in preliminary work \cite{CASResSF}, wherein some examples were used to compare and contrast the measure with conductance.  We note that in this work, we slightly modify the measure to smooth it by adding one to the denominator.  But, more importantly, we attempt a thorough comparison of vertex attack tolerance with several existing node-based resilience notions found via a thorough graph theory literature survey.  In particular, we compare and contrast VAT with the following previously known graph-theoretic notions of resilience: edge-based conductance\cite{SinclairJerrum,ChungSpectraBook}, vertex expansion\cite{vertExpApprox,vertexSeparators}, integrity\cite{Barefoot1987}, toughness\cite{Chvatal2006}, tenacity\cite{Cozzens1995} and scattering number\cite{Jung1978}.  While some such measures appear surprisingly similar in nominal form to vertex attack tolerance, we show that VAT captures resilience in a strictly more meaningful way.  Furthermore, we mathematically relate $\tau(G)$ to edge conductance by appropriate lower and upper bounds in the case of regular graphs, which is a case in which conductance is known to be meaningful.  One advantage of such relationships in the case of regular graphs is that eigenvalue bounds follow as corollaries as well.  We also present other bounds on VAT with respect to other measures in special cases.  However, we demonstrate and maintain that vertex attack tolerance is a strictly more appropriate resilience measure for both homogeneous and heterogeneous degree graphs in the context of node failures.

The second logical module of our paper then consists of extensive calculations of $\tau(G)$ for important generative models of scale-free graphs: preferential attachment graphs captured via the BA model, random scale-free graphs captured via the PLOD model, and \emph{heuristically-optimized trade-offs} graphs referred to as the HOTNet model\cite{HOTNet}.  Due to the inherent intractability (i.e. NP-completeness and in some cases approximation-hardness) of many aforementioned resilience notions, as well as the plausibly more difficult combinatorial form that vertex attack tolerance takes, an efficient exact algorithm to compute it is unlikely.  Therefore, we have only been able to exactly compute $\tau(G)$ of networks considered for sizes up to 40 nodes via a branch-and-bound approach.  Beyond that size and up to 2500 nodes, we used genetic algorithms and local search methods to calculate as accurately as feasible the vertex attack tolerances.  The methods themselves, and how they were optimized to converge more quickly, are of algorithmic interest in their own right.  Of course, the exact settings of the parameters greatly alters the graph structure, and one advantage of the PLOD model in that respect is that the degree distribution may be generated by any other model a priori.  Unfortunately, HOTNets, which exhibited the worst vertex attack tolerance on the various parameter settings considered even when controlling for average degree, cannot be tweaked to follow an exact degree distribution as such.  Furthermore, for small graph sizes, BA graphs exhibit better vertex attack tolerance than both PLOD and HOTNet models.  However it is the asymptotic behavior of a graph family that is statistically relevant.  Our results indicate that PLOD graphs with degree distributions identical to (i.e. with degree distributions generated from) BA graphs of the same size exhibit increasingly better vertex attack tolerance than the BA type graphs, although both graph types appear surprisingly resilient when the generatize BA parameter is $m = 2$.  In particular, the VAT values diminish very slowly despite repeated doubling of graph sizes.  We also discovered the following in the process of our simulations: \emph{Finding} bad attack sets (sets of nodes whose removal causes great disruption according to the VAT measure) were extremely difficult to find for BA graphs.  This leads to our current questions regarding resilience with respect to specific attack and failure models.

\section{Definitions and Preliminaries}

We define vertex attack tolerance (VAT) as
\begin{equation*}
\tau(G) = \min_{S \subset V} \{\frac{|S|}{|V-S-C_{max}(V-S)|+1 } \} 
\end{equation*}
where $C_{max}(V-S)$ is the largest connected component in $V-S$.  We note that this is a smoothened version of the measure of vertex attack tolerance previously introduced by the authors in \cite{CASResSF}.  It will also be convenient to refer to this unsmoothened version of vertex attack tolerance, which we denote as follows:
\begin{equation*}
\hat{\tau}(G) = \min_{S \subset V} \{\frac{|S|}{|V-S-C_{max}(V-S)|} \}
\end{equation*}
Combinatorial conductance or edge based conductance is defined as
\begin{eqnarray*}
\Phi(G) &= \min_{S \subset V, Vol(S) \le Vol(V)/2} \{ \frac{|Cut(S,V-S)|}{Vol|S|} \} \\
 &= \min_{S \subset V,  Vol(S) \le Vol(V)/2} \{ \frac{|Cut(S,V-S)|}{\delta_S|S|} \} 
\end{eqnarray*}
where $|Cut(S,V-S)|$ is the size of the cut separating $S$ from $V-S$, $Vol(S)$ is the sum of the degrees of vertices in $S$, and $\delta_S$ is the \emph{average} degree of vertices in $S$.  Vertex expansion is defined as
\begin{equation*}
\epsilon^V(G) = \min_{S \subset V} \{ n\frac{|N(S)|}{|S||V-S|} \} 
\end{equation*} 
where $N(S)$ denotes the outer boundary of $S$, namely $N(S) = \{ i \in V-S \mid \exists u \in S,_\ni \{u,v\} \in E \}$.  Integrity is defined as
\begin{equation*}
I(G) = \min_{S \subset V} \{ |S|  + C_{max}(V-S) \} 
\end{equation*}
Toughness is defined as 
\begin{equation*}
t(G) = \min_{S \subset V}\left\{\frac{|S|}{\omega(V-S)}\right\}
\end{equation*}
where $\omega (V-S)$ is the number of connected components in V-S.  Tenacity is defined as 
\begin{equation*}
T(G) = \min_{S \subset V}\left\{\frac{|S|+C_{max}(V-S)}{\omega (V-S)}\right\}
\end{equation*} \\
Scattering number is defined as
\begin{equation}
sn(G) = \max_{S \subset V}\{ \omega(V-S)+|S| \}
\end{equation}\\
However, as higher scattering numbers corresponds to worse resilience, it will be more convenient to discuss the smoothed inverted version of scattering, which we denote
\begin{eqnarray*}
h(G) &= \frac{1}{sn(G)+1}
&= \min_{S \subset V} \{ \frac{1}{\omega(V-S)+|S|+1} \}
\end{eqnarray*}\\
\\
It is clear that all above notions are defined based on finding some \emph{worst case} target set of nodes.  Therefore, it will be convenient to directly denote the referenced target set as well as the measure defined conditional upon a particular target set.  For the case of vertex attack tolerance, we denote set-vertex tolerance as
\begin{equation*}
\tau_S(G) =\frac{|S|}{|V-S-C_{max}(V-S)|+1 }
\end{equation*}
so that clearly $\tau(G) = \min_{S \subset V} \tau_S(G)$ and correspondingly
\begin{equation*}
S(\tau(G)) = argmin_{S \subset V} \tau_S(G)
\end{equation*}
Similary for set-conductance:
\begin{equation*}
\Phi_S(G) = \frac{|Cut(S,V-S)|}{\delta_S|S|}
\end{equation*}
so that clearly $\Phi(G) = \min_{S \subset V, |Vol(S)| \le |Vol(V)|/2} \Phi_S(G)$ and correspondingly
\begin{equation*}
S(\Phi(G)) = argmin_{S \subset V, |Vol(S)| \le |Vol(V)|/2} \tau_S(G)
\end{equation*}
For set vertex expansion:
\begin{equation*}
\epsilon_S^V(G) = n\frac{|N(S)|}{|S||V-S|}
\end{equation*}
so that $\epsilon^V(G) = \min_{S \subset V} \epsilon^V_S(G)$ and
\begin{equation*}
S(\epsilon^V(G)) = argmin_{S \subset V} \epsilon^V_S(G) 
\end{equation*} 
The pattern is clear for set integrity, set toughness, set tenacity, and smoothed inverse scattering number, respectively denoted $I_S(G), t_S(G), T_S(G),$ and $h_S(G)$.  For all such set resilience measures $f_S$, we may generally denote the target set function as $S(f(G)) = argmin f_S(G)$.

It will be convenient to denote the subgraph of a graph $G = (V,E)$ that is induced by a vertex set $S \subset V$ as $G_S = (S, E_S)$.

Finally, there is a special infinite family of graphs which we shall refer to in comparing resilience notions, namely the \emph{star graphs} denoted $Star(n)$ and defined as follows:
\begin{defi}
The \emph{star graph} $Star(n)$ is an undirected graph $G = (V, E)$ on $n = |V|$ nodes with a unique designated central node $q$ such that node $q$ is connected to every other node $u \in V - \{ q \} $, and there are no other edges between any other pair of nodes.  Namely, $E = \{ \{q, u \} \mid u \in V - \{ q \} \}$.  When the labeling of the graph is arbitrary, we may without loss of generality assume that $q = 1$.
\end{defi}

\section{Theoretical Results}

As stated, conductance is arguably the most important and well-studied resilience notion in the context of $d$-regular graphs.  A main result of this section is the following theorem which states essentially, for any $d$-regular graph that does not exhibit too-high conductance, the vertex attack tolerance of the same graph cannot exceed the conductance by more than a factor of $d$:
\begin{thm}\label{thm:vatcond}
Given a connected, undirected $d$-regular graph $G = (V,E)$, let set $S = S(\Phi(G))$ such that the induced subgraph $G_S$ is connected. If $\Phi(G) \le \frac{1}{d^2}$ then $\tau(G) < d\Phi(G)$.
\end{thm}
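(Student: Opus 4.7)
The plan is to exhibit $T = N(S)$, the external boundary of $S$, as a candidate VAT attack set and show $\tau_T(G) < d\Phi(G)$; since $\tau(G) \le \tau_T(G)$, this will yield the theorem. First I would bound the numerator: each vertex of $N(S)$ is incident to at least one edge of $Cut(S,V-S)$, so $|T| = |N(S)| \le |Cut(S,V-S)|$. Because $G$ is $d$-regular we have $\delta_S = d$, so $|Cut(S,V-S)| = d|S|\Phi(G)$, and the hypothesis $\Phi(G) \le 1/d^2$ sharpens this to $|N(S)| \le |S|/d$.

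Next I would analyze the structure of $V - T$. Removing $N(S)$ destroys every edge between $S$ and $V - S$, so the induced subgraph on $V - T$ decomposes into $S$ on one side (connected by the hypothesis on $G_S$) and the components of $G_{V - S - N(S)}$ on the other. Writing $w := |V| - |S| - |N(S)|$ and letting $K$ denote the largest component of $G_{V - S - N(S)}$, either $C_{max}(V - T) = K$ (when $|K| \ge |S|$) or $C_{max}(V - T) = S$.

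In the first subcase, $V - T - C_{max}(V-T)$ contains all of $S$, so $|V - T - C_{max}(V-T)| \ge |S|$, giving
\[
\tau_T(G) \;\le\; \frac{|N(S)|}{|S|+1} \;\le\; \frac{|Cut(S,V-S)|}{|S|+1} \;<\; \frac{|Cut(S,V-S)|}{|S|} \;=\; d\Phi(G),
\]
the final strict inequality supplied by the $+1$ smoothing together with $|Cut(S,V-S)| > 0$ (by connectedness). In the second subcase, $|V - T - C_{max}(V-T)| = w$, and closing $\tau_T(G) < d\Phi(G)$ reduces to showing $|N(S)| \cdot |S| < |Cut(S,V-S)| \cdot (w+1)$. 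Using $|N(S)| \le |Cut(S,V-S)|$, it suffices to establish $|S| \le w$, equivalently $2|S| + |N(S)| \le |V|$. The volume constraint in the definition of $\Phi$ gives $|S| \le |V|/2$, and the refined bound $|N(S)| \le |S|/d$ closes this when $|S|$ is not maximally balanced.

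The principal obstacle I anticipate is precisely the corner of the second subcase in which $|S|$ is close to $|V|/2$ and $w$ sits just below $|S|$; in that regime the slack $|N(S)| \le |S|/d$ provided by the hypothesis $\Phi(G) \le 1/d^2$, combined with the $+1$ smoothing in $\tau$'s denominator, must be exploited delicately to maintain the strict inequality. If $T = N(S)$ alone does not quite suffice in this corner, a natural fallback is to replace $N(S)$ by a balanced vertex cover of the cut edges (roughly half of its vertices drawn from each side of the cut), so that $V - T$ splits into two nearly equal pieces of sizes comparable to $|V|/2 - |T|/2$; this directly forces $|V - T - C_{max}(V - T)|$ to match the bulk of $|S|$, recovering the target bound up to the smoothing.
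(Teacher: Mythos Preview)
Your plan is essentially the paper's own argument: it too takes the outer boundary $S^{out}=N(S)$ as the attack set, establishes $|S^{out}|\le|Cut(S,V-S)|\le|S|/d$ from $\Phi(G)\le 1/d^2$, and splits into the same two cases according to whether $S$ or a component of $V-S-S^{out}$ is the largest surviving piece. For your second subcase (the paper's case~(i)) the paper directly uses $|V-S-S^{out}|\ge|S|-|S|/d=|S|\frac{d-1}{d}$ and chains to $\tau_{S^{out}}<(d-1)\Phi(G)$, without the balanced-vertex-cover fallback; your instinct that this corner is the delicate one is on target.
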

In proving Theorem \ref{thm:vatcond} we shall use the following lemma regarding conductance in $d$-regular graphs:
\begin{lem}\label{lem:sconn}
Given a connected, undirected $d$-regular graph $G = (V,E)$, there exists a set $S$ such that $S = S(\Phi(G))$ and the induced subgraph $G_S$ is connected.
\end{lem}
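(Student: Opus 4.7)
My plan is to start from any minimizer $S^* \in S(\Phi(G))$ and, if $G_{S^*}$ is disconnected, drop down to one of its components, which I will argue is itself a minimizer. So let $S_1, \ldots, S_k$ denote the connected components of $G_{S^*}$, and observe that no edge of $G$ can join distinct $S_i$ and $S_j$, since such an edge would merge the two sets into a single component of $G_{S^*}$.

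The first step will be to exploit the two additivities
\begin{equation*}
|Cut(S^*, V - S^*)| = \sum_{i=1}^{k} |Cut(S_i, V - S_i)|, \qquad Vol(S^*) = \sum_{i=1}^{k} Vol(S_i).
\end{equation*}
The cut identity uses the observation above (every cut-edge out of $S^*$ has its $S^*$-endpoint in exactly one $S_i$ and, by the no-crossing-edges fact, must land outside $S^*$), and the volume identity is immediate; in particular for a $d$-regular graph each $Vol(S_i) = d|S_i| > 0$. Together these identities rewrite $\Phi_{S^*}(G)$ as a convex combination of the set-conductances $\Phi_{S_1}(G), \ldots, \Phi_{S_k}(G)$ with strictly positive weights $Vol(S_i)/Vol(S^*)$.

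The second step is a one-line mediant argument: some index $i$ must satisfy $\Phi_{S_i}(G) \le \Phi_{S^*}(G) = \Phi(G)$. Since $S_i$ is a nonempty proper subset of $V$ with $Vol(S_i) \le Vol(S^*) \le Vol(V)/2$, it is feasible for the $\Phi$-minimization, so minimality forces $\Phi_{S_i}(G) = \Phi(G)$; that is, $S_i \in S(\Phi(G))$. Because $G_{S_i}$ is connected by construction, this $S_i$ is the desired witness.

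I do not expect a serious obstacle. The only point that needs care is the cut-additivity, which rests entirely on the observation that no edge of $G$ crosses between two different components of $G_{S^*}$; $d$-regularity itself is used only lightly, to guarantee that each denominator $Vol(S_i) = d|S_i|$ is strictly positive so that the convex-combination reformulation and the mediant step are legitimate.
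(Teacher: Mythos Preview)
Your argument is correct and mirrors the paper's proof: both rest on the cut/volume additivity across disconnected pieces of a minimizer and the mediant (convex-combination) inequality to pass to a smaller minimizer. The only cosmetic difference is that the paper splits $S$ into two parts and iterates by finiteness, whereas you go directly to a connected component of $G_{S^*}$, which is a slightly cleaner packaging of the same idea.
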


Before presenting the proof of Theorem \ref{thm:vatcond} relating VAT via conductance in the case of $d$-regular graphs, we emphasize that conductance does not capture node-based resilience in the case of \emph{heterogeneous degree} graphs to any meaningful approximation factor.  We have motivated this comparison and contrast in previous work \cite{CASResSF} via example graphs, including the graph $Star(10)$, and in fact this discrepancy is a major motivation of this work itself.  However, here we wish to generalize and formalize this point theoretically, as we do in the next theorem:
\begin{thm}\label{thm:star}
There exists an infinite graph family, namely the star graphs $Star(n)$ such that a targeted attack against the central node results in a maximal disconnection of the graph into $n-1$ isolated nodes.  Yet, the following are the conductance and vertex attack tolerance of $Star(n)$ respectively, for $n \ge 3$:
\begin{enumerate}
\item $\Phi(Star(n)) = 1$
\item $\tau(Star(n)) = \frac{1}{n-1}$
\end{enumerate}
In other words, there is an infinite graph family which is maximally intolerant against a targeted node attack.  The conductance of this family is maximal, whereas the vertex attack tolerance of this family is minimal, amongst all possible graphs.
\end{thm}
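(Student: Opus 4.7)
The plan is to prove both equalities by direct case analysis on whether the target set $S$ contains the central hub $q$, exploiting the simple structure of $Star(n)$ in which $q$ has degree $n-1$ and every other vertex has degree $1$, so $Vol(V) = 2(n-1)$.

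For the conductance claim, I would first note that the volume constraint $Vol(S) \le Vol(V)/2 = n-1$ is very restrictive. If $q \in S$, then $Vol(S) \ge \deg(q) = n-1$, forcing $S = \{q\}$; in this case every edge of the graph is cut, so $|Cut(S,V-S)| = n-1$ and $\Phi_S(Star(n)) = (n-1)/(n-1) = 1$. If $q \notin S$, then $S$ consists entirely of leaves, each contributing its unique edge (to $q$) to the cut, so $|Cut(S,V-S)| = |S| = Vol(S)$ and again $\Phi_S(Star(n)) = 1$. Since $\Phi_S \le 1$ holds trivially for every nonempty $S$, this covers all cases and yields $\Phi(Star(n)) = 1$.

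For the vertex attack tolerance claim, I would again split on whether $q \in S$. If $q \in S$, then $V - S$ is an independent set of $n - |S|$ isolated vertices, so $C_{max}(V-S) = 1$ and $\tau_S = |S|/(n - |S|)$; this is minimized over $1 \le |S| \le n-1$ at $|S| = 1$, giving $\tau_{\{q\}}(Star(n)) = 1/(n-1)$. If $q \notin S$, then the induced subgraph on $V - S$ is itself a star (possibly with fewer leaves) and is therefore connected, so $C_{max}(V-S) = |V - S|$ and $|V - S - C_{max}(V-S)| = 0$, yielding $\tau_S = |S|/1 = |S| \ge 1$. Comparing the two cases gives $\tau(Star(n)) = \min(1/(n-1), 1) = 1/(n-1)$ for $n \ge 3$.

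The argument is essentially a bookkeeping exercise rather than a difficult proof; no step is a serious obstacle, as the star's structure forces most choices. The only subtlety worth being careful about is treating the volume constraint in the conductance computation correctly, since when $q \in S$ one must observe that $\deg(q) = n-1$ already saturates the allowable volume, precluding larger hub-containing sets from consideration. The concluding remark that $\Phi = 1$ and $\tau = 1/(n-1)$ are respectively the maximum and minimum possible values of these measures then follows immediately from the observation that $\tau_S \ge 1/(n-1)$ and $\Phi_S \le 1$ hold universally by the obvious bounds $|V - S - C_{max}(V-S)| \le n - 1$ and $|Cut(S,V-S)| \le Vol(S)$.
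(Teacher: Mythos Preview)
Your proposal is correct and follows essentially the same case split on whether the central node $q$ lies in $S$ that the paper uses. The only cosmetic difference is that for the lower bound $\tau(Star(n)) \ge \frac{1}{n-1}$ the paper invokes its general Lemma~\ref{lem:minvat} rather than your direct case analysis of $q \notin S$, but this amounts to the same computation.
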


In proving the above theorem, we will also make use of the following Lemma that characterizes minimal vertex attack tolerance of any graph:
\begin{lem}\label{lem:minvat}
For any connected, undirected graph $G = (V,E)$ on $n = |V| \ge 3$ nodes, $ \tau(G) \ge \frac{1}{n-1}$
\end{lem}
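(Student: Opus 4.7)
The plan is to reduce the lemma to a short counting bound on the denominator of $\tau_S(G)$. I restrict to nonempty proper subsets, i.e.\ $1 \le |S| \le n - 1$, which is the only regime in which $\tau_S$ is informative ($S = \emptyset$ is excluded by convention, since otherwise $\tau(G) = 0$ for every graph $G$ and the measure would be trivialized). For any such $S$, the complement $V - S$ is nonempty, so its largest connected component must contain at least one vertex, i.e.\ $|C_{max}(V-S)| \ge 1$.

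Writing $s = |S|$, the key inequality is
\[
|V - S - C_{max}(V-S)| \;=\; (n-s) - |C_{max}(V-S)| \;\le\; n - s - 1,
\]
which, plugged into the definition of $\tau_S(G)$, yields
\[
\tau_S(G) \;=\; \frac{s}{|V - S - C_{max}(V-S)| + 1} \;\ge\; \frac{s}{n-s}.
\]
A direct cross-multiplication shows that $\frac{s}{n-s} \ge \frac{1}{n-1}$ is equivalent to $s(n-1) \ge n - s$, i.e.\ $sn \ge n$, which holds whenever $s \ge 1$. Taking the minimum over admissible $S$ then delivers $\tau(G) \ge \frac{1}{n-1}$.

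There is really no substantive obstacle here beyond careful bookkeeping: one only needs to verify the extreme case $s = n-1$, in which $V - S$ consists of a single vertex that coincides with $C_{max}(V-S)$, the denominator collapses to $1$, and $\tau_S(G) = n-1$, which dominates $\frac{1}{n-1}$ for $n \ge 2$. The hypothesis $n \ge 3$ plays no role beyond making $\frac{1}{n-1}$ a genuinely nontrivial bound, and the bound will in fact be tight: the forthcoming computation $\tau(Star(n)) = \frac{1}{n-1}$ in Theorem \ref{thm:star} exhibits equality on an infinite family, so the lemma plugs directly into that argument, upgrading the trivial upper bound (obtained by attacking the center of the star) into the matching lower bound.
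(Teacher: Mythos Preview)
Your proof is correct and follows essentially the same route as the paper: bound the numerator below by $|S|\ge 1$ and the denominator above via $|C_{max}(V-S)|\ge 1$. Your version is in fact slightly sharper, since you retain the intermediate inequality $\tau_S(G)\ge \frac{s}{n-s}$ before specializing to $s=1$, and you are explicit about excluding $S=\emptyset$; the paper's proof reaches the same conclusion more tersely by arguing directly that the numerator is at least $1$ and the denominator at most $n-1$.
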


Now we demonstrate the proofs of Theorems \ref{thm:star} and \ref{thm:vatcond}, and Lemmas \ref{lem:minvat} and \ref{lem:sconn}, respectively:

\begin{proof}[Proof of Theorem \ref{thm:star}]
Let $q$ be the designated central node in $Star(n)$.  First, let us bound $\Phi(Star(n))$: Let $S = S(\Phi(Star(n)))$.  First consider the case that $q \ni S$.  In that case, all degrees of nodes in $S$ are one, and each node in $S$ is adjacent only to $q$ which is not in $S$.  Therefore, $|Cut(S,V-S)| = |S| = |Vol(S)|$, proving that $\Phi(Star(n)) = 1$ for this case.  In the other case that $q \ni S$, note that the volume constraint in the definition of conductance restricts us to sets $S$ that do not exceed half of the total volume of $V$.  The volume of $q$ alone is proportional to its degree $n-1$, whereas there are exactly $n-1$ other nodes whose total volume is exactly $n-1$.  Therefore, in the case that $q \in S$, it must be that $S = \{ q \}$ alone.  In this situation, $Cut(S, V-S)$ is exactly all the $n-1$ edges of $G$, and so calculating for conductance again yields $\Phi_S(Star(n)) = \frac{n-1}{n-1} = 1$.

Now we consider $\tau(Star(n))$.  As intuitively clear, first consider $\tau_S(Star(n))$ for $S = \{ q \} $, namely a targeted attack of the central node.  Because removal of $q$ results in only isolated nodes, $|C_{max}| = 1$.  Moreover, $|S| = 1$.  Therefore, $\tau_S(Star(n)) = \frac{1}{n-1}$.  By Lemma \ref{lem:minvat}, no other set can achieve a lower VAT value, and so $\tau(Star(n)) = \frac{1}{n-1}$.
\end{proof}

\begin{proof}[Proof of Theorem \ref{thm:vatcond}]
Let $S^{out}$ denote the vertex boundary of $S$ that is outside of $S$, also called the \emph{outer vertex boundary of $S$} and precisely being $S^{out} = \{ v \in V-S \mid \exists e = \{u,v\} \in Cut(S,V-S) \}$.  We may lower bound and upper bound $S^{out}$ as follows:
\begin{enumerate}
\item $|Cut(S,V-S)| \le d|S^{out}|$ with equality occurring only when all neighbors of nodes in $S^{out}$ lie in $S$ instead of $V-S$
\item $|S^{out}| \le |Cut(S,V-S)|$ with equality occurring only when all outer neighbors of nodes in $S$ are distinct
\end{enumerate}
Combining Bound (2) above with the fact that $\frac{|Cut(S,V-S)|}{d|S|} = \Phi(G) \le \frac{1}{d^2}$ we have that
\begin{equation*}
|S^{out}| \le |Cut(S,V-S)| \le \frac{|S|}{d}
\end{equation*}
Now, consider the structure of $G$ upon the removal of nodes $S^{out}$.  As $S^{out} \subset V-S$, and removal of $S^{out}$ also removes edges along $Cut(S,V-S)$, we know that at least $S$ would remain as a connected component.   There are two possible situations: either (i) $S$ would be the largest connected component in which case $|V-S^{out}-C_{max}(V-S^{out})| = |V-S-S^{out}|$ or, (ii) $S$ is not the largest connected component in which case $|V-S^{out}-C_{max}(V-S^{out})| \ge |S|$.  

First consider case (i): By definition of $\Phi$, $S$ cannot contain a strict majority of nodes of $V$, and therefore, $|V-S| \ge |S|$ is known.  Combining this with $|S^{out}| \le |S|\frac{1}{d}$ implies that
\begin{equation*}
|V-S^{out}-C_{max}(V-S^{out})| = |V-S-S^{out}| \ge |S|\frac{d-1}{d}
\end{equation*}
Further combining this with Bound (2) above, we have that
\begin{eqnarray*}
\tau_{S^{out}} &=  \frac{|S^{out}|}{|V-S^{out}-C_{max}(V-S^{out})|+1}  \\
&\le \frac{(d-1)|Cut(S,V-S)|}{d|S|+1} \\
&< (d-1)\Phi(G) < d\Phi(G)
\end{eqnarray*}
finishing the proof for case (i), because $\tau(G) \le \tau_{S^{out}} < d\Phi(G)$.

In the second case (ii), we also obtain that $\tau_{S^{out}} \le d\Phi(G)$: Due to the previously established facts $|S^{out}| \le |Cut(S,V-S)|$ and $|V-S^{out}-C_{max}(V-S^{out})| \ge |S|$ we have
\begin{eqnarray*}
\tau_{S^out} &= \frac{|S^{out}|}{|V-S^{out}-C_{max}(V-S^{out})|+1} \\
&< \frac{|Cut(S,V-S)|}{|S|} \\
&= d\Phi(G)
\end{eqnarray*}
finishing the proof.
\end{proof}

\begin{proof}[Proof of Lemma \ref{lem:minvat}]
Because $G$ is connected, at least one node must be attacked in order to result in any disconnection, so that the minimum achievable value of the numerator of the $\tau$ function is $1$.  Considering the maximum possible denominator of $\tau$, namely the maximum achievable $|V-S-C_{max}|+1$, we claim that it is $n-1$.  Because we already argued that $|S| \ge 1$, it suffices to show that $C_{max} \neq 0$ for this situation, as $n-1-1+1 = n-1$ then yields the desired lower bound.  But, $C_{max}$ cannot be zero unless $V = S$ (or else some node would remain), and taking $V=S$ gives a VAT of $1 >> \frac{1}{n-1}$ for $n \ge 3$.
\end{proof}

\begin{proof}[Proof of Lemma \ref{lem:sconn}]
Let $S = S(\Phi(G))$, and assume that $G_S$ can be partitioned into two subgraphs that are disconnected from each other, namely $G_1 = (S_1, E_1)$ and $G_2 = (S_2, E_2)$ such that $S_2 = S - S_1$ and $E_2 = E_S - E_1$.  Therefore, the edge boundary or $S$ is correspondingly partitioned such that 
\begin{equation*}
|Cut(S,V-S)| = |Cut(S_1,V-S)|+|Cut(S_2,V-S)|
\end{equation*}
For convenience, denote $a = |Cut(S_1, V-S)| = |Cut(S_1, V-S_1)|$, $b = |Cut(S_2, V-S)| = |Cut(S_2, V-S_2)|$, and $c = |Cut(S,V-S)| = a+b$, .  Furthermore, denote $x = |S_1|$, $y = |S_2|$, and $z = |S| = x+y$.  Now, consider the relative set conductances of $S_1$ and $S_2$, and without loss of generality, assume that $\Phi_{S_1}(G) \le \Phi_{S_2}(G)$.  Then, because the normalization factor $d$ is lost from both sides we have:
\begin{equation*}
\frac{a}{x} \le \frac{b}{y}
\end{equation*}
But, due to the minimality of the conductance achieved by $S$, we have
\begin{equation*}
\frac{a+b}{x+y} \le \frac{a}{x} \le \frac{b}{y}
\end{equation*}
But, it is verifiable that for any positive integers $a,b,x,y$ that if $\frac{a}{x} < \frac{b}{y}$ strictly, then 
\begin{equation*}
\frac{a}{x} < \frac{a+b}{x+y} < \frac{b}{y}
\end{equation*}
which would contradict the minimality of $\Phi(G) = \Phi_S(G)$.  Therefore, the previous inequality can only be satisfied when
\begin{equation*}
\frac{a+b}{x+y} = \frac{a}{x} = \frac{b}{y}
\end{equation*}
And, in that case, $S_1$ or $S_2$ also achieve the conductance of $G$.  However, we did not assume that either of them are connected.  Nonetheless, it is clear that as long as one of them is disconnected, the same argument as above can be applied to get another partition into two smaller sets yet which each achieve the conductance $\Phi$.  However, due to the finiteness of $G$, this cannot continue ad infinitum, and there must exist a set $S_i$ such that $\Phi_G = \Phi_{S_i}(G)$ and induced subgraph $G_{S_i}$ is connected.
\end{proof}

\section{Experimental Comparison of Resilience Measures}

Here, we experimentally compare and discuss the resilience measures of vertex attack tolerance, integrity\cite{Barefoot1987}, toughness\cite{Chvatal2006}, tenacity\cite{Cozzens1995} scattering number\cite{Jung1978}, and vertex expansion\cite{vertExpApprox,vertexSeparators}.  We have already established in the previous section that conductance is not an appropriate measure of resilience in the context of targeted node attacks.  The specific graphs we are comparing are shown in Fig.~\ref{fig:imageComparison}.  They are (a) $Star(10)$, (b) HOTnet \cite{HOTNet}, (c) C3, a graph that breaks logically into 3 similar sized components \cite{Costa}, (d) barbell, a 3-regular graph on 10 nodes with an obvious bottleneck, (e) a PLOD graph on 25 nodes \cite{Palmer2000}, (f) a 3-regular wheel on 10 nodes, and (g) a big barbell consisting of two cliques connected by a single edge.  The exact experimental measurements of the resilience notions for these graphs are presented in Table \ref{tab:comparison}.  The HOTnet and PLOD graphs are 25 nodes each.  The star, barbell, and wheel graphs are 10 nodes each.  Furthermore, the barbell and wheel graphs are also identical in the sizes of their edge sets, namely both have 15 edges.  The C3 graph has 33 nodes, whereas the big barbell graph has 24 nodes.


\begin{figure}
    \centering
    \subfigure[star]
    {
        \includegraphics[width=1.0in]{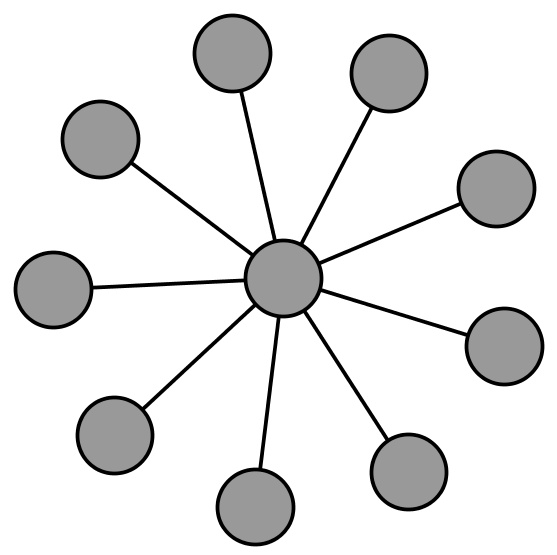}
        \label{fig:first_sub}
    }
    \subfigure[HOTNet]
    {
        \includegraphics[width=1.0in]{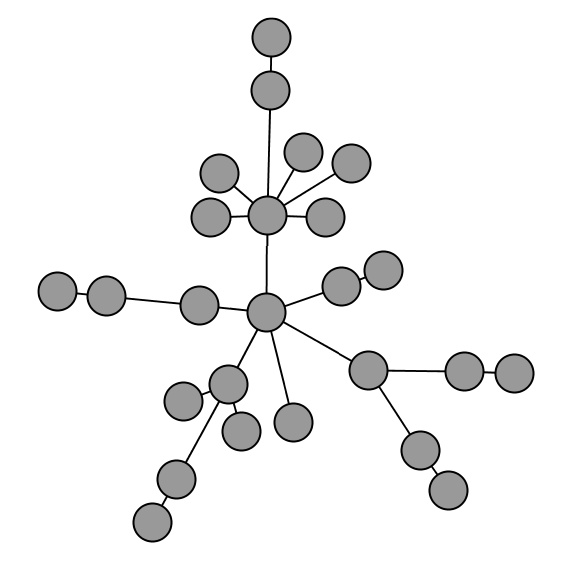}
        \label{fig:second_sub}
    } \\
    \subfigure[C3]
    {
        \includegraphics[width=1.0in]{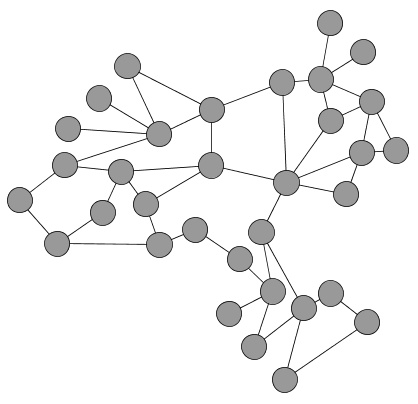}
        \label{fig:third_sub}
    }
\subfigure[barbell]
    {
        \includegraphics[width=1.0in]{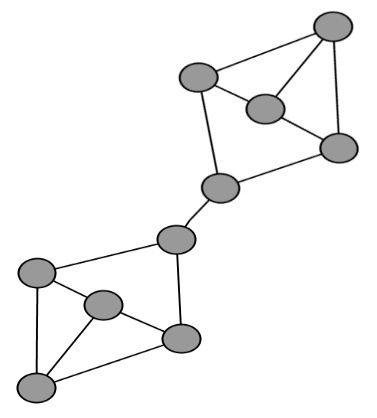}
        \label{fig:third_sub}
    } \\
 \subfigure[PLOD]
    {
        \includegraphics[width=1.0in]{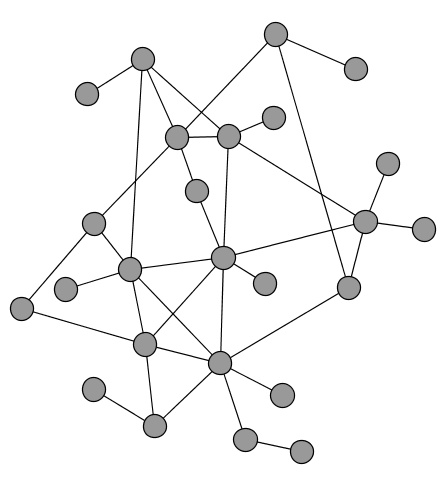}
        \label{fig:third_sub}
    }
\subfigure[wheel]
    {
        \includegraphics[width=1.0in]{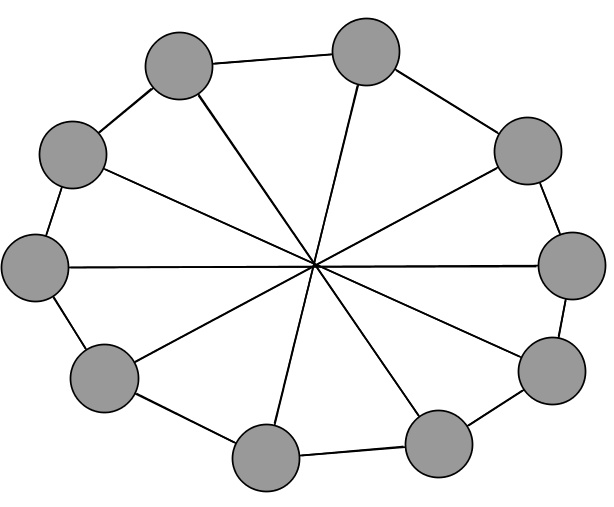}
        \label{fig:third_sub}
    } \\
\subfigure[big barbell]
    {
        \includegraphics[width=1.5in]{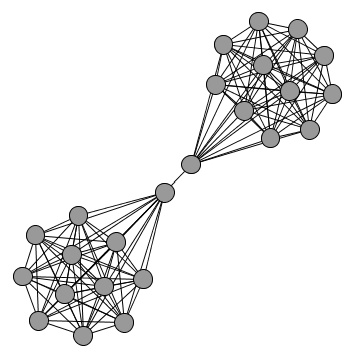}
        \label{fig:third_sub}
    }
    \caption{The graphs used in comparison.}
    \label{fig:imageComparison}
\end{figure}

\begin {table}
\caption {Comparison of measures on 7 graphs} \label{tab:comparison} 
\begin{center}
    \begin{tabular}{|l|r|r|r|r|r|c|}
    \hline
    Graph Type & VAT &  I(G) & t(G) & T(G) & h(g) &  $\epsilon^V(G)$ \\ \hline
    star           & .11 & .20  &  .11  & .22   & .11 & .40 \\
    barbell      & .20  & .60  &  .50 & 1.75    & .50 & .40 \\ 
    Wheel     & 1.00  & .60   &  1.00 & 1.2  & 1.00 & 1.60 \\ \hline
    HOTNet     & .06 & .28  &  .14  & .47   & .09 & .16 \\ 
    C3            & .15  & .36 &  .33 & .81    & .14 &  .29 \\ 
    big barbell     & .08  & .54  &  .50 & 6.5   & .50 & .17 \\ 
    PLOD       & .25  & .36  &  .33 & .80   & .17 &  .38 \\ 
    \hline
\end{tabular}
\end{center}
\end{table}

It is certainly meaningful to compare graphs of similar sizes to each other.  As such, we will first consider the star, barbell, and wheel graphs.  We claim that a measure that accurately captures resilience against targeted node attacks must clearly distinguish the resiliences of these three graphs, and rank their resilience in order of least resilient to most resilient as follows: star, barbell, and wheel.  We justify this as follows: The star graph is the most vulnerable graph because removal of a single node, which is a minimal size attack, creates a maximally disconnected remaining graph of only isolated nodes.  In fact, star graphs of any size are the most vulnerable graph of that size for the same reason.  With regards to the barbell graph, a single node attack (of either node adjacent to the single edge bottleneck) disconnects the graph into exactly half, which is a significant disconnection.  In the context of edge attacks, such a disconnection would be most significant because each edge removal can disconnect at most one new component.  However, in the context of vertex attacks, it is not as severe as a situation in which many small components results.  Nonetheless, amongst the set of attacks that results in exactly two remaining components, the worst case situation is for the components to be of similar sizes to each other (a half-way split), as that situation maximizes the disconnected pairs.  And, such a case is exhibited by the barbell graph.  As for the wheel, it is a very well connected graph for its size.  In fact, any removal of up to three nodes is only useful for disconnecting a single node from the remaining graph (a trivial disconnection).  In order to cause any non-trivial disconnection in the wheel, at least four nodes must be attacked, and four is already almost half of the size of the wheel itself.  If attacking half of the nodes of a graph is necessary to cause a serious disruption, then that graph is very resilient.  Therefore, we have justified our resilience ranking of these three graphs.

Now note that VAT clearly respects this resilience ranking of star, barbell, and wheel.  In fact, it can be verified by the table that \emph{all} the measures considered respect this ranking, though not strictly, except for tenacity, which ranks barbell as more resilient than wheel.  The other two problematic measures in the context of this ranking are integrity, which ranks barbell as equally resilient as wheel, and vertex expansion, which ranks star as equally resilient as barbell.  The problem with integrity is that the number of vertex removals necessary to cause a great disruption are not taken into account in any proportional manner.  The problem with vertex expansion on the other hand is that the distribution of \emph{component sizes} remaining is not taken into account, so that the worst case disruption always appears to correspond to a half-way split.  However, toughness and scattering number remain in tact as those measures respecting the rankings proposed thus far.

Now let us more closely examine toughness $t(G)$, scattering number $h(G)$, and VAT for the other graphs measured: All three measures agree upon HOTnet as being the least resilient of the remaining graphs.  This is reasonable by viewing the hierarchically star-like structure of the HOTnet illustrated, in addition to the fact that it is a tree (and so only $1$-connected graph).  VAT, toughness, and scattering also all agree upon graph C3 as being the next least-resilient graph.  However, comparing the resilience of the PLOD graph with both barbell and big barbell poses inconsistencies amongst the three measures: Both toughness and scattering rank both big barbell and barbell as having significantly higher resilience than PLOD (with both ranking barbell equally resilient as big barbell), but VAT ranks PLOD as being more resilient than barbell, which it ranks as more resilient than big barbell.  Both barbell and big barbell are graphs of severe bottlenecks in which the removal of one or two nodes creates a half-way split in the entire graph.  It is clear that PLOD has no such bottlenecks in the sense that there is no small subset of nodes whose removal causes a half-way or even three-way split.  However, PLOD does appear to be composed of a well connected core subgraph of the vast majority of nodes along with several tiny external subgraphs attached around this core, which we may refer to as the external \emph{hairs} connected to \emph{follicles} of the core graph.  Although there exists sets of follicle nodes whose removal may disconnect proportionally as many hair subgraphs, each new component that results is tiny in comparison to the core subgraph, and indeed indeed proportionally many follicles need to be attacked in order to disconnect them.  Therefore, we claim that PLOD should be considered more resilient than the barbell and big barbell graphs as consistent with VAT, as opposed to toughness and scattering.  Furthermore, whereas part of the problem in comparing these graphs is the size difference (which may explain the slight difference in the VAT measures as well), there is a more fundamental problem with toughness and scattering in their lack of capturing a possible bottleneck in a graph.  Note that this is the opposite problem of the vertex expansion (and conductance) measures, which failed to capture any information on remaining component sizes.  In the context of all the measures considered herein, and given all the graphs considered, we believe that VAT best captures graph resilience with respect to targeted node attacks.

\section{Comparison and Discussion of VAT of different Scale-Free models}

Because exact calculation of vertex attack tolerance for sufficiently large graphs proved to be intractable, we used a combination of techniques to obtain either exact values or tight upper bounds.  A branch-and-bound technique was used on small graphs to get exact values, and this proved infeasible for graphs larger than 35 nodes.  For larger graphs, we formulated a genetic algorithm to obtain close approximations of the VAT values.  While the margin of error of the genetic algorithm cannot be ascertained, the results are certainly valid upper bounds of the actual VAT values.  We will explain the genetic algorithm used in more detail in the subsequent section both because of some novelty in our method that might generalize to optimize genetic algorithms to compute other resilience measures, and also so that the reader can put our results in perspective.  The results obtained by running the genetic algorithm on the preferential attachment based Barabasi-Albert graphs (with $m = 2$ neighbors chosen for each entering vertex) and the random scale-free graphs of the same degree distribution are summarized in Table \ref{tab:vatcomp}.  Precisely, the \emph{degree distribution} of the random scale-free graphs were \emph{generated} by the corresponding preferential attachment graph\footnote{This is another reason why we omit the HOTnet, as we cannot guarantee generating a HOTnet with an exactly pre-specified degree distribution.}.  Because we are interested in how the topological generative properties affects the resilience of different scale-free models, we felt it is particularly important to control for degree distribution as much as possible.  Moreover, because a random graph of any given degree distribution can be defined, it is appropriate in this case to first generate the BA graph and then feed its degree distribution into the random scale-free graph generator.  We note further that our random scale-free graph generator is thus identical to the PLOD model \cite{Palmer2000} except for our explicit input of the degree distribution.

It can be seen that the random scale-free graphs have consistently better vertex attack tolerance than the Barabasi-Albert graphs for the parameter $m = 2$.  We also computed VAT values for various HOTnets, and all appeared significantly less resilient than the BA and random scale-free graphs of corresponding size.  We computed BA graphs with parameter $m = 1$ as well and discovered that such BA graphs were significantly non-resilient, with easily identifiable hubs whose removal causes severe disruption, yielding very low VAT values.  We had difficulty in generating identical degree PLOD graphs from BA graphs with $m =1$ due to the preponderance of degree one nodes.  Additionally the generated PLOD graphs tended not to be connected, so that the VAT value was automatically meaningless.  Therefore, we include the results for the $m = 2$ BA graphs and their corresponding degree PLOD counterparts only, noting that such graphs remain relatively resilient despite doubling of graph sizes.  However, the corresponding PLOD graphs remain \emph{more} resilient than the BA graphs of identical degree distribution, provided they are connected (which happens in likelihood given $m = 2$).

In addition to the actual computer VAT values in the table \ref{tab:vatcomp}, one may view the worst-case disruptions caused by the critical attack sets of the BA graph and the identically sized and distributed PLOD graph of 1000 nodes respectively in Figures \ref{fig:baattackviz} and \ref{fig:plodattackviz}.  It can be seen that the worst case proportional disruption to the PLOD graph only appears to remove a hair off of a well-connected core whereas the worst-case proportional disruption to the BA graph disconnects many components, some of which are sizeable, at once.

\begin {table}
\caption {Comparison of VAT of BA and Random SF graphs} \label{tab:vatcomp} 
\begin{center}
    \begin{tabular}{|l|r|r|r|r|r|c|}
    \hline
    n     & B-A  &  Random \\ \hline
    40    & 0.3000    & 0.33333  \\
    45    & 0.2692    & 0.30435  \\
    100   & 0.216667  & 0.26087  \\
    250   & 0.186813  & 0.25000  \\ 
    500   & 0.19171   & 0.28571  \\ 
    1000  & 0.193289  & 0.22222  \\ 
    2500  & 0.192679  & 0.22222  \\ 
    \hline
\end{tabular}
\end{center}
\end{table}

\begin{figure}
    \centering
    \subfigure[BA Attack]
    {
        \includegraphics[width=3.0in]{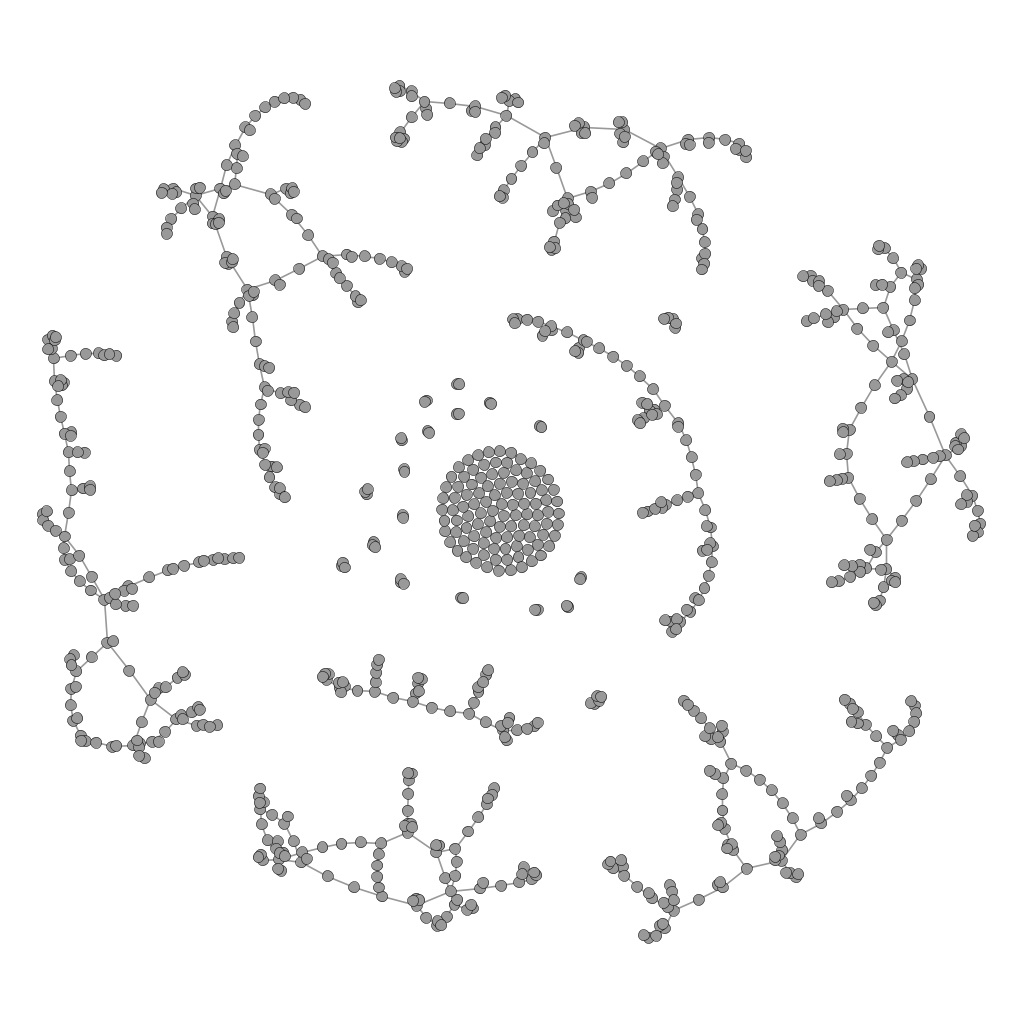}
        \label{fig:baattack}
    }
    \subfigure[BA CC Sizes]
    {
        \includegraphics[width=3.0in]{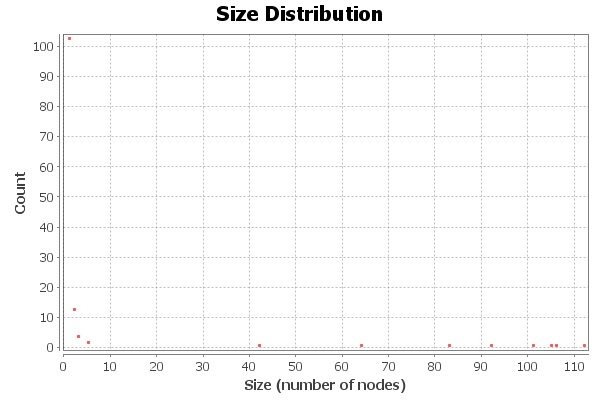}
        \label{fig:baccsizes}
    }
    \caption{BA attack visualization and resulting component size distributions.}
    \label{fig:baattackviz}
\end{figure}

\begin{figure}
    \centering
    \subfigure[PLOD Attack]
    {
        \includegraphics[width=3.0in]{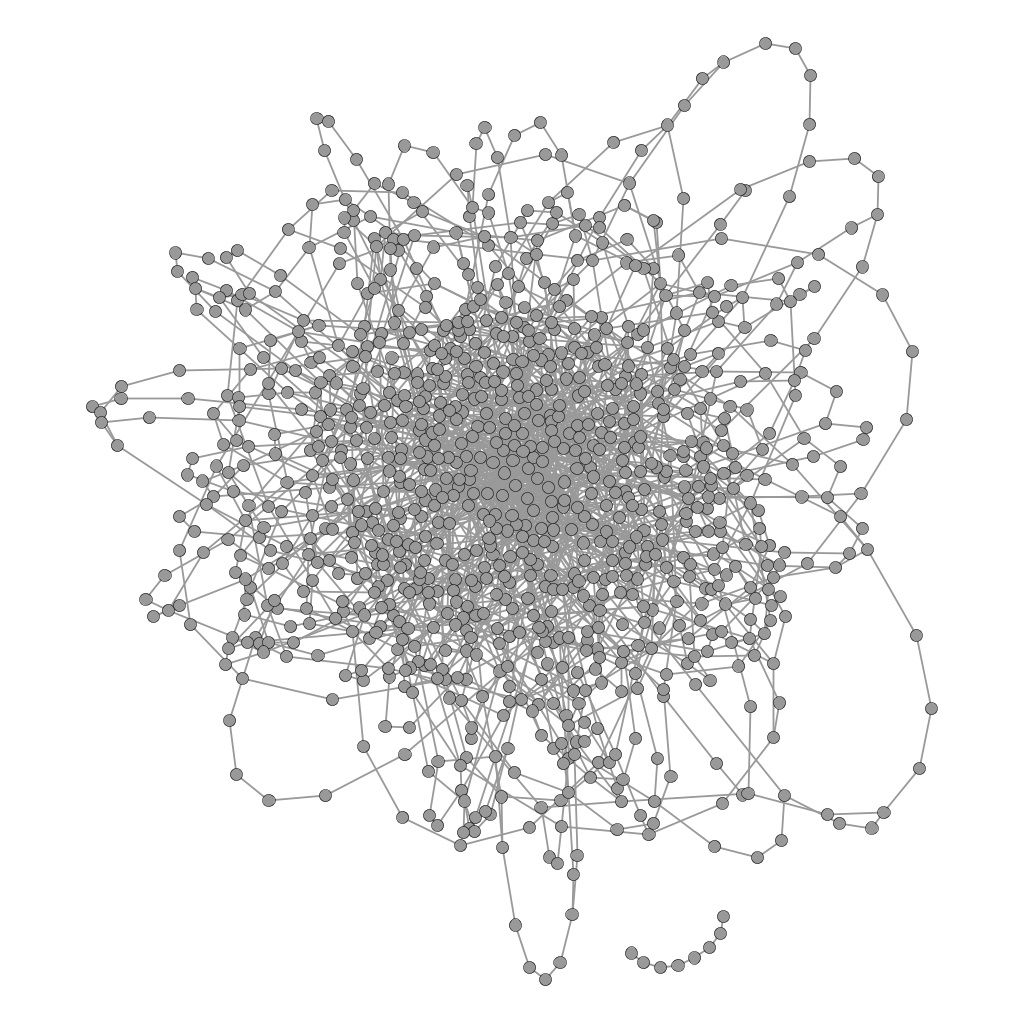}
        \label{fig:plodattack}
    }
    \subfigure[PLOD CC Sizes]
    {
        \includegraphics[width=3.0in]{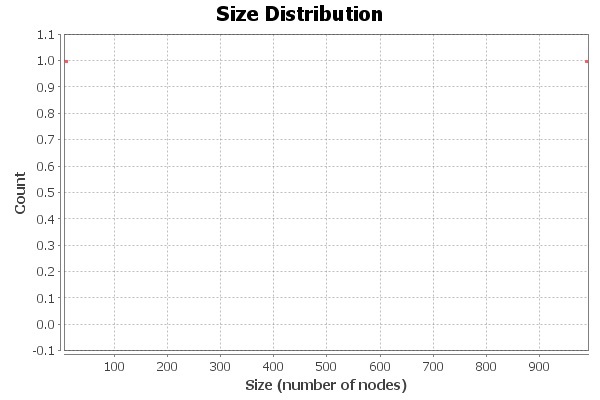}
        \label{fig:plodccsizes}
    }
    \caption{PLOD attack visualization and resulting component size distributions.}
    \label{fig:plodattackviz}
\end{figure}

\section{Computation of VAT}

As with many other resilience notions, calculating a global minimum for VAT proves to be a challenging endeavor.  As $V$ grows larger, the size of the search space for a set S, $|\mathcal P \left({V}\right)|$, minimizing vertex conductance grows exponentially.  Given this, exact computation over all $S$ proves intractable for $V > 35$.  To overcome this, a genetic programming approach was explored.

\subsection{Genetic Algorithm}
The genetic algorithm consists of the following steps: First, a set of random chromosomes is generated. Each chromosome $c_j$ consists of a set of $|V|$ bits $B$.  Let $V = \{1, 2, \dots, n \}$ be the usual ordering of nodes.  The situation that $B_i$ is $1$ corresponds to considering $V_i \in S$, where $S$ is the candidate target set of vertex attack tolerance. Next, a fitness $f_j \in [0,1]$ is assigned to each chromosome, corresponding to $1 - \tau_{c_i}(G)$.  Note that because vertex attack tolerance is normalized, $f_j$ is also normalized.  After fitness computation, a new population of chromosomes is generated: Two mates are selected using tournament selection, favoring individuals with higher fitness.  A crossover operation is performed with some probability to generate 2 new chromosomes.  Each new chromosome undergoes a mutation step, wherein each bit is flipped with some non-zero probability.  The fitness is then evaluated for the new population and the cycle continues.

While this technique is more feasible for large graph sizes, two observations are noted.  First, the high rate of memory access bottlenecks GPU based parallelization.  Secondly, beneficial mutations are seemingly rare and produce a sub-optimal population.

\subsection{Hill-Climbing}
With these insights, a hill-climbing approached was tried to enhance the genetic algorithm.  One initial chromosome is generated using a randomized contraction based min-cut algorithm \cite{Karger1993}, guaranteeing a cut.  To transform this edge set into a node set $S$, one endpoint is selected at random from each edge and added to the node set.  This guarantees that $V-S$ will be disconnected.  A mutation vector $M$ of length $j$ is created with $m_1 < m_2 < \dots < m_j \le |V|$ initialized to $<1,2, \dots,j>$.  A mutated chromosome with Hamming distance $j$ is generated by flipping bits $m_1, m_2, \dots, m_j$.  The fitness is then evaluated. If the new chromosome is more fit, then $M$ is reset and the new chromosome is kept; otherwise, $m_j$ is incremented using ripple carry to deal with $m_j = |V|+1$. This technique attempts all possible mutations of length $j$ then increments $j$ when no progress is made.  To optimize progress, mutation length $j<3$ is used and many vertex cuts are tried.

While the genetic algorithm performed relatively well, beneficial mutations become increasingly improbable.  This newer non-probabalistic method is able to improve chromosomes signifigantly beyond the point they stagnate in the genetic algorithm.  After 200,000 generations with population size 64, BA with 100 verticies stagnated.  Hill-climbing decreased this an additional 10 percent in minutes.    Not only does it acheive better bounds on VAT, but it does so orders of magnitude faster than the parallelized GA.

\section{Discussion and Continuing Work}

We have motivated a measure of resilience for networks in the context of targeted node attacks, which we termed vertex attack tolerance or VAT, compared our measure against existing measures, and compared the VAT of preferential attachment and random scale-free graphs.  Our theoretical results indicate that VAT behaves similarly to conductance in the case of $d$-regular graphs, but captures node-based resilience in a way that conductance cannot in the case of heterogeneous degree graphs.  Our comparison of VAT with the other resilience measures indicates that VAT is the only measure that fully captures both the major \emph{bottlenecks} of a network and the resulting \emph{component size distribution} upon targeted node attacks.  

With regards to the comparative node-based resilience of different scale-free models, the preferential attachment model (B-A model) and the random scale-free model in particular, our results indicate that the random scale-free networks are consistently more resilient against targeted node attacks than the preferential attachment networks of the exact same degree distribution.  However, somewhat surprisingly, both network types appear relatively resilient compared to their sizes: Even as the network continues to double the VAT values decrease very slowly.  We also noted that the HOTnets appear significantly less resilient than both BA and random scale-free graphs.  In evaluation of our results, there is the caveat that results were obtained via a mixed genetic and hill-climbing algorithmic approach so that we cannot guarantee the accuracy of the measures.  However, we allowed the algorithm to run for a minimum of 100,000 generations of size 64 taking 1-8 hours, particularly for large graphs.  Furthermore, hill-climbing up to a mutation length of 2 was performed on both the previous GA results and a minimum of 1,000 random vertex cuts.  Therefore, the accuracy of the results have been enhanced as much as possible.

Nonetheless, in the course of our computations, we discovered that with parameter $m = 2$ finding a \emph{bad} vertex set $S \subset V$ whose removal causes notably greater disruption in the network compared to other vertex sets became significantly more difficult, especially for the BA type graphs.  In terms of VAT, we are referring to sets $S$ whose behavior is similar to the set $S(\tau(G))$.  Much previous work discusses ``hubs'' of such networks, easily identifiable by the highest degree nodes, which are purported major points of vulnerability.  We found that while such hubs are absolutely critical for BA graphs of parameter $m = 1$, and they are also important for BA graphs with $m = 2$, looking only at the hubs is insufficient to find the critical attack sets for BA graphs of $m = 2$.  On the other hand, once the critical attack set is discovered for any of the BA type graphs, indeed the level of disruption caused is more significant compared to the worst case disruption of the PLOD graph of corresponding degree.  Nonetheless, finding the non-hub nodes in the most critical attack set of the BA graphs of $m = 2$ was very difficult and time consuming, and most attack sets caused little to no harm for such graphs.  While this further validates the tolerance of such graphs against \emph{random} vertex attacks, it also invokes the following question: Given the difficulty of discovering the worst case set of nodes with respect to VAT, under what reasonable computational and information-theoretic assumptions can an \emph{attacker} discover such a set of nodes for a given scale-free graph model?

We continue to work on defining, bounding, and computing vertex attack tolerance \emph{with respect to an attack model $F$} for various network types.  For example, the simplest case of a \emph{random} attack model $R$ yields $\tau^R(G) = E[\tau_S(G)]$, the expected value of VAT over all subsets of vertices.  Other important attack models naturally include those preferring the highest degree nodes as well as those preferring the highest betweenness nodes.  One problem with a model based on betweenness and other centrality notions, however, is that they are not in general a \emph{locally computable} quantity, but rather requires global information about the network.  In future work, we consider the power of an attacker or set of attackers with respect to VAT, when their informational and computational resources are reasonably limited.

\bibliographystyle{plain}
\bibliography{res-sf}

\end{document}